\documentclass[11pt,a4paper]{article}
\usepackage[margin=2.15cm]{geometry}
\usepackage{amsmath,amssymb,amsthm,mathtools,bm,bbold}
\usepackage{booktabs,array,microtype}
\usepackage{authblk}
\usepackage[T1]{fontenc}
\usepackage{lmodern}
\usepackage{xcolor}
\usepackage[colorlinks=true,linkcolor=blue!45!black,citecolor=blue!45!black,urlcolor=blue!45!black]{hyperref}
\usepackage[nameinlink,capitalise]{cleveref}

\newtheorem{theorem}{Theorem}
\newtheorem{proposition}{Proposition}

\newcommand{\tr}{\operatorname{tr}}

\newcommand{\ii}{\mathrm{i}}
\newcommand{\cH}{\mathcal H}

\title{Scalene Yang--Baxter triples as a source of hidden symmetries beyond the ordinary Yang--Baxter equation}
\author[1]{Pramod Padmanabhan\thanks{E-mail: \texttt{pramod23phys@gmail.com}}}
\author[1]{Somnath Maity\thanks{E-mail: \texttt{somnathmaity126@gmail.com}}}
\author[2]{Vladimir E. Korepin\thanks{E-mail: \texttt{vladimir.korepin@stonybrook.edu}}}
\affil[1]{Department of Physics, School of Basic Sciences, Indian Institute of Technology Bhubaneswar, Argul, Odisha 752050, India}
\affil[2]{C. N. Yang Institute for Theoretical Physics, Stony Brook University, Stony Brook, New York 11794-3840, USA}
\date{\today}

\begin{document}
\maketitle

\begin{abstract}
We study a nearest-neighbor non-Hermitian spin chain obtained from one member of an exact non-braided scalene Yang--Baxter triple.  Its local Hamiltonian density violates both the difference-form Reshetikhin condition and its general non-difference counterpart, obstructing its realization by a differentiable homogeneous regular solution of the ordinary Yang--Baxter equation.  The transfer matrices constructed from the regular member do not commute among themselves at distinct spectral parameters.  Nevertheless, the scalene Yang--Baxter relation implies cross-commutativity with another transfer matrix constructed from the third member of the scalene triple.  We evaluate the latter for arbitrary chain length and show that, on even periodic chains, it is a finite generating function of a non-obvious staggered nilpotent symmetry.  The resulting conserved hierarchy belongs entirely to the algebra generated by this single symmetry and hence does not constitute an extensive family of algebraically independent charges.  Nevertheless, this example demonstrates that scalene Yang--Baxter triples can act as an algebraic symmetry-discovery mechanism beyond the ordinary self-commuting transfer-matrix framework.
\end{abstract}

\section{Introduction}
The quantum Yang--Baxter equation (YBE) is the standard algebraic source of commuting transfer matrices and local conserved charges in one-dimensional quantum systems \cite{Faddeev1996,KorepinBogoliubovIzergin1993,Baxter1982}.  For a regular $R$-matrix, $R(u,u)=P$, logarithmic derivatives of its transfer matrix generate a mutually commuting hierarchy whose first nontrivial member is a nearest-neighbour Hamiltonian.  At the Hamiltonian level, the resulting densities obey strong local compatibility conditions usually referred to as Reshetikhin conditions.  These conditions underlie modern classifications of regular $4\times4$ solutions, both in difference and non-difference form \cite{deLeeuwPribytokRyan2019}. Furthermore, the role of low-range conservation laws in diagnosing quantum integrability has recently been clarified substantially. The Reshetikhin condition, obtained as the first nontrivial compatibility condition in the expansion of a regular difference-form Yang--Baxter equation, has long been used as a necessary test for conventional Yang--Baxter integrability \cite{GrabowskiMathieu1995}. Recent results have shown that its significance is considerably stronger. Zhang proved that conservation of the canonical three-local charge is sufficient to generate higher local conserved quantities \cite{Zhang2025}, while Hokkyo established, in a rigorous infinite-chain setting, that the Reshetikhin condition implies an infinite hierarchy of mutually commuting local charges \cite{Hokkyo2026}. For isotropic nearest-neighbour spin chains, Shiraishi and Yamaguchi obtained a sharp dichotomy between complete integrability and the absence of nontrivial local conserved quantities \cite{ShiraishiYamaguchi2025}. Related classification results establish an analogous all-or-nothing structure for broad classes of symmetric spin-$\frac{1}{2}$ chains \cite{YamaguchiChibaShiraishi2024a,YamaguchiChibaShiraishi2024b}. Very recent work further argues that, under broad standard assumptions, the Reshetikhin condition is sufficient for the reconstruction of a regular analytic difference-form Yang--Baxter matrix \cite{Sanatani2026}.

Complementary progress has been made on rigorous demonstrations of non-integrability through the absence of finite-range conserved quantities. This program began with model-specific results for the $XYZ$ chain in a magnetic field and has since been extended to mixed-field Ising chains, next-nearest-neighbor models and general classes of symmetric spin chains \cite{Shiraishi2019,Chiba2024,ShiraishiNNN2024,HokkyoTest2025}. More recently, Fan \emph{et al.} proved the absence of local conserved charges in the Fredkin spin chain and several of its truncated versions, illustrating that exact ground-state solvability and unusual symmetry structures need not imply quantum integrability \cite{FanEtAlFredkin2025}. These developments make violations of the Reshetikhin conditions particularly significant: they give a concrete obstruction to the conventional regular Yang--Baxter mechanism. In this work we probe another avenue to generate conserved quantities for a Hamiltonian, namely the {\it scalene Yang-Baxter equation}. This  construction produces an exact cross-commuting transfer matrix and a nontrivial conserved polynomial algebra, thereby separating the symmetry-generating content of a Yang--Baxter-type relation from conventional transfer-matrix integrability.

The scalene Yang--Baxter equation replaces the three copies of a single $R$-matrix by three generally distinct operators $A,B,C$,
\begin{equation}
 A_{12}B_{13}C_{23}=C_{23}B_{13}A_{12}.
 \label{eq:scalene}
\end{equation}
Such relations were first studied in low dimensional Hilbert spaces by J. Hietarinta and C. Viallet \cite{HietarintaViallet2022}, where they discuss an eight-vertex rational solution that can be rotated to the $R$-matrix of the $XYZ$ spin chain. Apart from this the only work where the term scalene YBE was used can be found in a recent work \cite{konstantinou2026scalene}, that discusses certain set-theoretical solutions of these equations. The braid structure of the indices in these equations allows a suitable adaptation of the quantum inverse scattering method (QISM) to the scalene case as well. In particular, the meaning of the scalene transfer-matrices are significantly different from the standard equilateral YBE case. Rather than forcing one family to commute with itself, an invertible intertwiner $A$ naturally forces the commutativity of the transfer matrices $\tau_B$ and $\tau_C$ built from the distinct local operators $B$ and $C$ respectively. This raises a basic question: can the scalene equation detect conserved structures even when the regular $B$-operator does not define an ordinary Yang--Baxter integrable family?

We answer this question affirmatively with one explicit example. We find a regular $B$-matrix leading to a non-Hermitian, locally non-diagonalizable, Hamiltonian density that fails both ordinary regular-Yang--Baxter obstruction tests, the Reshetikhin conditions.  Its Pauli representation gives no transparent indication of the symmetry obtained from the scalene YBE. Instead, this symmetry is constructed using the $C$-transfer matrix, which in this case, evaluates exactly to a polynomial in the staggered raising operator $J_+$. Thus this shows that the scalene relation constructs a nontrivial operator that generates the commutant of the Hamiltonian.

It is important to highlight the distinction between symmetry and full integrability at this point. The powers generated by $\tau_C$ are algebraically dependent and in this case also trivially commute with each other. On the other hand $\tau_B(b)$ is not self-commuting for generic $b$. Our conclusion is therefore deliberately narrower than conventional Yang--Baxter integrability: scalene triples can generate hidden symmetries outside the ordinary regular $R$-matrix framework. This is a useful intermediate step in the search for scalene triples whose $C$-transfer matrices generate extensive sets of independent conserved quantities for the operators generated by the $B$-transfer matrix. Such examples would present genuine quantum integrable systems that violate both Reshetikhin conditions and thus go beyond the Yang-Baxter paradigm for integrability.

The paper is organized as follows. In Sec. \ref{sec:scalene-triple-nh-Hamiltonian} we introduce the exact non-braided scalene triple and extract the non-hermitian Hamiltonian from the regular $B$-matrix. Sec. \ref{sec:resh-violation} establishes the obstruction to an ordinary regular Yang--Baxter realization through the difference-form and non-difference-form Reshetikhin tests.  In Sec. \ref{sec:nilpotent-symmetry} we derive the cross-commutativity of the two transfer matrices and show that the tower of conserved charges for the non-hermitian Hamiltonian is generated by even powers of the staggered raising operator. This is done by evaluating the $C$-transfer matrix for arbitrary chain length and identifying the resulting staggered nilpotent hierarchy.  We conclude in Sec. \ref{sec:conclusion} with a discussion of the scope of the construction and its relation to conventional integrability. An appendix \ref{app:vectorization} is included to provide some missing details.

\section{The scalene triple and the non-hermitian Hamiltonian}
\label{sec:scalene-triple-nh-Hamiltonian}
For $b,x,y\in\mathbb C$, the triple,
\begin{align}
 A(x,y) =
\begin{pmatrix}
1&0&0&x\\ 0&-1&y&0\\ 0&0&1&0\\ 0&0&0&-1
\end{pmatrix}~&;~B(b)=\begin{pmatrix}
1&b&b&b\\ 0&b&1+b&0\\ 0&1+b&b&0\\ 0&0&0&1
\end{pmatrix}, 
\nonumber \\
C(x,y)&=\begin{pmatrix}
1&0&0&x\\ 0&1&0&0\\ 0&y&-1&0\\ 0&0&0&-1
\end{pmatrix},
\label{eq:ABC}
\end{align}
satisfies the non-braided scalene relation 
\begin{equation}
A_{12}(x,y)B_{13}(b)C_{23}(x,y)
=C_{23}(x,y)B_{13}(b)A_{12}(x,y),
\label{eq:exactSYBE}
\end{equation}
identically for arbitrary $b,x,y$. Due to this property the parameters $b,x,y$ can be seen as arbitrary functions of some spectral parameter. This avoids the need for an explicit {\it Baxterization} \cite{Jones1990} of the scalene triple, making the above solution ready to be used in the QISM framework.  

Certain properties of this triple immediately follow. We first note that $A$ and $C$ are invertible for arbitrary $x$ and $y$, while $B$ is invertible away from $b=-\frac{1}{2}$. More precisely,
\begin{equation}
\det A=1,\qquad \det B=-(1+2b),\qquad \det C=1,
\end{equation}
and
\begin{equation}
A^2=\mathbf1,\qquad C^2=\mathbf1.
\end{equation}
Thus we can use either $A$ or $C$ as an intertwiner in this triple. In this work we will let $A$ play this role.

The middle operator, $B$ is regular at $b=0$,
\begin{equation}
B(0)=P,\qquad B(b)=P+b~U,\qquad
U=\begin{pmatrix}
0&1&1&1\\0&1&1&0\\0&1&1&0\\0&0&0&0
\end{pmatrix},
\label{eq:U}
\end{equation}
with $P$ being the standard $4\times 4$ permutation operator.
Furthermore, as $PU=UP=U$, the two common regularity conventions give the same Hamiltonian density,
\begin{equation}
h=P\,\partial_bB(b)|_{b=0}=\partial_bB(b)|_{b=0}P= b'~U.
\label{eq:h-density}
\end{equation}
This  matrix is non-diagonalizable and not normal. Its minimal polynomial is $\lambda^2(\lambda-2)$, with the Jordan canonical form consisting of three blocks: one block of size 2 corresponding to $\lambda = 0$, one block of size 1 corresponding to $\lambda = 0$, and one block of size 1 corresponding to $\lambda = 2$.
Expanding \cref{eq:h-density} in the Pauli basis gives
\begin{align}
h={}&b'\left[\frac12\mathbf1\otimes\mathbf1-\frac12Z\otimes Z
+\frac14(\mathbf1\otimes X+X\otimes\mathbf1+X\otimes Z+Z\otimes X)\right.\nonumber\\
&\left.+\frac34X\otimes X+\frac14Y\otimes Y\right.\nonumber\\
&\left.+\frac{\ii}{4}(\mathbf1\otimes Y+Y\otimes\mathbf1+X\otimes Y+Y\otimes X+Y\otimes Z+Z\otimes Y)\right].
\label{eq:pauli-density}
\end{align}
Then the Hamiltonian on a periodic chain of length $L$
\begin{equation}
H=\sum_{j=1}^{L}h_{j,j+1},\qquad L+1\equiv1,
\end{equation}
becomes 
\begin{align}
H={}&b'\sum_j\bigg[\frac34X_jX_{j+1}+\frac14Y_jY_{j+1}-\frac12Z_jZ_{j+1}
+\frac14(X_jZ_{j+1}+Z_jX_{j+1})\nonumber\\
&\hspace{10mm}+\frac{\ii}{4}(X_jY_{j+1}+Y_jX_{j+1}+Y_jZ_{j+1}+Z_jY_{j+1})
+\frac12X_j+\frac{\ii}{2}Y_j\bigg] \nonumber\\[1mm]
=&\frac{b'}{2}\sum_j \left[X_jX_{j+1}+ Y_jY_{j+1}-Z_jZ_{j+1}+2~\sigma^+_j\sigma^+_{j+1}+\sigma^+_jZ_{j+1}+Z_j \sigma^+_{j+1}+\sigma^+_j+\sigma^+_{j+1}\right],
\label{eq:Hpauli}
\end{align}
up to an additive constant $b'L/2$. The model is manifestly non-Hermitian. It can be seen as a non-hermitian deformation of the $XXZ$ chain\footnote{This is beyond the known integrable deformations of the $XXZ$ spin chains \cite{beisert2013integrable}. It is also not a known twist deformation of the $XXX$ spin chains \cite{kulish2009twist} or part of known integrable non-hermitian spin chains \cite{Maity_2025}.}. It is not a conventional Jordan--Wigner quadratic chain as besides a $ZZ$ interaction, it contains parity-changing, mixed $XZ$ and $YZ$ terms. A simple inspection of the Hamiltonian \cref{eq:Hpauli} suggests that this model is devoid of any obvious symmetry. The scalene relation will be used to disprove this statement.

\section{Violation of Reshetikhin conditions}
\label{sec:resh-violation}
We now show that this density violates both the difference form and the non-difference form of the Reshetikhin conditions. This shows that this density does not arise from a standard regular Yang-Baxter solution or $R$-matrix. The claims concern differentiable homogeneous regular $4\times4$ ordinary $R$-matrices and are invariant under the usual addition of scalar and telescopic density terms.

\subsection{Difference-form condition}
\label{subsec:diff-conndition}
For a difference-form regular $R$-matrix, the Hamiltonian density must satisfy
\begin{equation}
\mathcal D:=[h_{12}+h_{23},[h_{12},h_{23}]]=K_{23}-K_{12}
\label{eq:staticR}
\end{equation}
for some two-site operator $K$ \cite{Reshetikhin1983,GrabowskiMathieu1995}. We can now explicitly construct the double commutator $\mathcal{D}$ for the Hamiltonian density given in \cref{eq:pauli-density}, we obtain an expression of the form
\begin{eqnarray}
\mathcal{D} &=& 2\sigma^+_2 \sigma^-_3 +(-1+2\sigma^-_2)\sigma^+_3-2 \sigma^+_2\sigma^+_3 -2\sigma^-_1\sigma^+_2(1 +2\sigma^+_3) -Z_1 \sigma^+_2  -2 Z_1\sigma^+_2\sigma^+_3 \nonumber\\[0.5mm]
&+& Z_2\sigma^+_3 + Z_1 Z_2 +2 Z_1 Z_2\sigma^+_3 +(\sigma^+_2 -Z_2) Z_3 +\sigma^+_1 \left(1-2\sigma^-_2 +2\sigma^+_2 +4\sigma^+_2\sigma^-_3 \right.\nonumber\\[0.5mm] &+&\left. 2\sigma^+_2 Z_3 -Z_2 -2 Z_2 Z_3 \right).
\end{eqnarray}
The operator $\mathcal{D}$ contains six genuine three-site operator terms containing indices 1, 2 and 3. Therefore, it cannot be reduced to a form involving only two-site operators, such as $\mathcal{D} =K_{23}-K_{12}$. Consequently, we expect no $K$-matrix satisfying the above relation to exist. 

The naive argument presented above provides an intuitive justification for the non-existence of the two-site $K$ operator. It can be made more rigorous by proving the following proposition.
\begin{proposition}[Non-existence of a two-site operator $K$]
\label{prop:non-K}
For the Hamiltonian density $h$ defined in \cref{eq:pauli-density} and the corresponding three-site operator $\mathcal{D} =[h_{12}+h_{23},[h_{12},h_{23}]]$, there is no $4 \times 4$ two-site operator $K$, with sixteen independent entries, satisfying 
\begin{equation*}
    \mathcal{D} =K_{23}-K_{12}.
\end{equation*}
\end{proposition}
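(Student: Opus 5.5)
The plan is to turn the equation $\mathcal D=K_{23}-K_{12}$ into a linear-algebra obstruction that does not depend on the sixteen unknown entries of $K$. We expand both sides in the product Pauli basis $\{\sigma^{a}\otimes\sigma^{b}\otimes\sigma^{c}\}$, $a,b,c\in\{0,x,y,z\}$, of three-site operators. This basis is orthogonal for the Hilbert--Schmidt form $\langle X,Y\rangle=\tr(X^\dagger Y)$.

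Take any $K$ and write $K=\sum_{a,b}k_{ab}\,\sigma^a\otimes\sigma^b$. Then $K_{12}$ is supported only on strings $\sigma^a\otimes\sigma^b\otimes\mathbf1$, and $K_{23}$ only on strings $\mathbf1\otimes\sigma^b\otimes\sigma^c$. Hence every string with non-identity Pauli factors on both site $1$ and site $3$ has coefficient zero in $K_{23}-K_{12}$, for every choice of the $k_{ab}$. The same holds for the projection of $K_{23}-K_{12}$ onto the span of such strings. This gives a necessary condition for solvability that does not involve $K$:
\begin{equation*}
\tr\!\big[(\sigma^a\otimes\sigma^b\otimes\sigma^c)\,\mathcal D\big]=0\qquad\text{for all } a\neq0,\ c\neq0 .
\end{equation*}

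So it suffices to exhibit one Pauli string $\sigma^a_1\sigma^b_2\sigma^c_3$ with $a,c\neq0$ whose coefficient in $\mathcal D$ is nonzero. We compute $\mathcal D$ from \cref{eq:pauli-density}, or equivalently read it from the explicit expression displayed above after rewriting $\sigma^\pm=\tfrac12(X\pm\ii Y)$. Natural candidates are the terms $Z_1Z_2\sigma^+_3$, $\sigma^+_1\sigma^+_2 Z_3$, $\sigma^+_1Z_2Z_3$ and $-2\sigma^-_1\sigma^+_2\sigma^+_3$. For example, expanding $Z_1Z_2\sigma^+_3$ gives $\tfrac12 Z_1Z_2X_3+\tfrac{\ii}{2}Z_1Z_2Y_3$. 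We then check that no other term in $\mathcal D$ cancels the coefficient of $Z_1Z_2X_3$. Only terms of the form (site-1 operator with a $Z$ part)$\cdot$(site-2 with a $Z$ part)$\cdot$(site-3 with an $X$ part) can contribute to this coefficient, and there are only a few of these to inspect. A single surviving coefficient $\tr[(Z\otimes Z\otimes X)\mathcal D]\neq0$ proves the proposition.

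An alternative that avoids choosing a string is to set up the full linear system. Writing $\mathcal D=K_{23}-K_{12}$ gives $64$ equations for the $16$ unknowns $k_{ab}$, and we would show inconsistency by comparing ranks. The projection argument is cleaner, so we use it as the main route and keep the linear-system route as a computer-verifiable cross-check.

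The main obstacle is computational reliability rather than conceptual. The displayed form of $\mathcal D$ mixes $\sigma^\pm$ and $Z$, and a term like $-2\sigma^-_1\sigma^+_2\sigma^+_3$ may partially cancel others once everything is expanded into Pauli strings. I would therefore recompute $\mathcal D$ directly as a $64\times64$ matrix from $U$ (the factor $b'^3$ is an irrelevant nonzero scalar, assuming $b'\neq0$). Then I would extract the coefficient $\tfrac18\tr[(Z\otimes Z\otimes X)\mathcal D]$, and also that of a second string such as $X\otimes Z\otimes Z$ for redundancy, and confirm that each is nonzero.
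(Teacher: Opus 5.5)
Your proof is correct, and it takes a different route from the paper's.

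\textbf{The paper's route.} It vectorizes $\mathcal D=K_{23}-K_{12}$ into the $64\times16$ system $\mathcal M_{\rm d}\mathbf k=\operatorname{vec}\mathcal D$. An exact row reduction, done by computer and not displayed, gives $\operatorname{rank}\mathcal M_{\rm d}=15$ and $\operatorname{rank}(\mathcal M_{\rm d}\mid\operatorname{vec}\mathcal D)=16$. Rouch\'e--Capelli then finishes the proof.

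\textbf{Your route.} You give an explicit certificate of inconsistency instead. The functional $O\mapsto\tr[(Z\otimes Z\otimes X)\,O]$ kills every column of $\mathcal M_{\rm d}$, because $K_{23}-K_{12}$ has no weight on strings that are non-identity at both sites $1$ and $3$. It does not kill $\operatorname{vec}\mathcal D$. This is the dual (Fredholm-alternative) form of the rank jump, and it makes the paper's ``naive'' remark about genuine three-site terms rigorous.

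\textbf{The check succeeds.} Your worry about cancellations is unnecessary. The set $\{\mathbf 1,Z,\sigma^+,\sigma^-\}$ is a basis of $2\times2$ matrices whose last three elements are traceless. So the displayed terms of $\mathcal D$ are linearly independent, and the six terms containing both site $1$ and site $3$ lie exactly in the sector you project onto. The only contribution to $Z_1Z_2X_3$ is $2Z_1Z_2\sigma^+_3$, which gives coefficient $b'^3$. A direct $8\times8$ computation from $U$ confirms $\tr[(Z\otimes Z\otimes X)\mathcal D]=8b'^3\neq0$. Likewise $\tr[(X\otimes Z\otimes Z)\mathcal D]=-8b'^3$, matching the term $-2\sigma^+_1Z_2Z_3$.

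\textbf{What each approach buys.} Yours is a hand-checkable, one-number proof that does not depend on trusting an unreported symbolic rank computation. The paper's is a complete decision procedure. Your projection criterion is only sufficient. The rank test would also catch obstructions that live entirely in the one- and two-site sectors. One example is the requirement that the coefficients of $\sigma^a\otimes\sigma^b\otimes\mathbf 1$ and $\mathbf 1\otimes\sigma^a\otimes\sigma^b$ (with $a,b\neq0$) be opposite. Your argument would be silent on such obstructions if the genuine three-site part of $\mathcal D$ vanished; here it does not, so your argument suffices.
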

\begin{proof}
We regard all sixteen entries of $K$ as independent unknowns and vectorize \cref{eq:staticR}.  Exact rational row reduction gives
\begin{equation*}
\operatorname{rank}\mathcal M_{\rm d}=15,\qquad
\operatorname{rank}(\mathcal M_{\rm d}\mid\operatorname{vec}\mathcal D)=16.
\end{equation*}
By using the theorem due to Rouch\'e--Capelli \cref{thm:rouche-capelli}, we conclude that no $K$-matrix exists, and the difference-form Reshetikhin condition fails. This Theorem and the associated vectorization procedure is described in detail in Appendix~\ref{app:vectorization}.
\end{proof}

\subsection{General non-difference condition}
\label{subsec:ndiff-condition}
\begin{proposition}[Failure of the necessary integrability condition]
\label{prop:failure-integrability}
Let
\begin{eqnarray*}
Q_2=\sum_j h_{j,j+1}~, \qquad
S=\sum_j [h_{j,j+1},h_{j+1,j+2}]~, \qquad
G(g)=\sum_j g_{j,j+1}~,
\end{eqnarray*}
on a periodic spin chain of length $L$, with indices understood modulo $L$.
Assume that for the Hamiltonian density under consideration,
\begin{eqnarray*}
g=\left.\partial_u h(u)\right|_{u=u_0} \propto h.
\end{eqnarray*}
Then
\begin{eqnarray*}
[Q_2,G(g)]=0.
\end{eqnarray*}
For the Hamiltonian density considered here, however, $[Q_2,S]\neq 0$ for generic periodic chain length $L$. Hence the necessary condition
\begin{eqnarray*}
[Q_2,G(g)]=[Q_2,S]
\label{eq:necessary-condition}
\end{eqnarray*}
cannot be satisfied.
\end{proposition}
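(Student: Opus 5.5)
The proof has two parts. The first is trivial: $[Q_2,G(g)]=0$ whenever $g\propto h$. The second is the substantive claim, $[Q_2,S]\neq0$. Combining them shows the necessary condition $[Q_2,G(g)]=[Q_2,S]$ would force $[Q_2,S]=0$, which is false.

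For the first part, write $g=c\,h$ for some constant $c$. Then $G(g)=\sum_j c\,h_{j,j+1}=c\,Q_2$, and $[Q_2,cQ_2]=0$. Here $h(u)$ is understood as the density obtained from $B$ with $b=b(u)$, so $h(u)=b'(u)\,U$ by \cref{eq:h-density}. Differentiating gives $g=b''(u_0)U\propto h$. This justifies the hypothesis for our model, and because the density has a fixed operator content $U$, no other choice of $g$ is available within this family.

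For the second part, I first reduce to a local statement. Since $Q_2=b'\sum_j U_{j,j+1}$ and $S=b'^2\sum_j[U_{j,j+1},U_{j+1,j+2}]$, we have
\begin{equation*}
[Q_2,S]=b'^3\sum_{j,k}\big[U_{j,j+1},[U_{k,k+1},U_{k+1,k+2}]\big].
\end{equation*}
Only the terms with $j\in\{k-1,k,k+1,k+2\}$ survive. This is a translation-invariant sum of four-site operators, $[Q_2,S]=b'^3\sum_k \mathcal{W}_{k,\dots,k+3}$.

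To show this sum is nonzero for $L\ge 5$, I expand $\mathcal{W}$ in the Pauli-string basis. Each string $\sigma^{a}_k\sigma^{b}_{k+1}\sigma^{c}_{k+2}\sigma^{d}_{k+3}$ whose outermost factors $\sigma^a$ and $\sigma^d$ are both non-identity has support of exactly four consecutive sites. Distinct translates of such a string are therefore distinct basis elements of the $L$-site Pauli basis. A nonzero coefficient on any such string in $\mathcal W$ cannot cancel against any other term of the periodic sum. Genuinely three-site or shorter strings can partially cancel between translates, so I would ignore them and look only at the range-four content.

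The main obstacle is to exhibit one range-four string with a nonzero coefficient. I would do this exactly, either by a symbolic computation of the $256\times256$ matrix of $\mathcal{W}$ projected onto the strings with nonidentity ends, or by hand using the structure $U=|0\rangle(\langle 1|+\langle 2|+\langle 3|)+(|1\rangle+|2\rangle)(\langle1|+\langle2|)$. The cleanest route is to find one matrix element of $\mathcal W$ that cannot come from lower-range operators. Concretely, I would compute $\langle 0000|\,[Q_2,S]\,|v\rangle$ for a suitable product state $|v\rangle$, such as a state containing three raised spins, on a ring with $L=5$ or $6$, and check that it is nonzero. By the range argument this extends to all $L\ge5$, which is the meaning of ``generic $L$''. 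Small $L$ is excluded because wrap-around produces overlaps between translates.

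If the four-site coefficient happened to vanish, I would fall back on a direct exact computation of $[Q_2,S]$ for several specific $L$, for example $L=5,6,7$, using rational arithmetic. I would then argue by the same vectorization and rank logic as in Proposition~\ref{prop:non-K}, showing that the result is not the zero matrix.
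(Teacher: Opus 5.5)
Your first part, $g=b''U\propto h$ giving $G(g)\propto Q_2$, is correct and is the paper's argument. The second part has a genuine gap. The range-four content you plan to detect is identically zero, for any nearest-neighbour density. For $L\ge4$, $h_{k-1,k}$ commutes with $h_{k+1,k+2}$ and $h_{k+2,k+3}$ commutes with $h_{k,k+1}$, so the Jacobi identity gives
\[
[h_{k-1,k},[h_{k,k+1},h_{k+1,k+2}]]=[[h_{k-1,k},h_{k,k+1}],h_{k+1,k+2}]=-[h_{k+1,k+2},[h_{k-1,k},h_{k,k+1}]].
\]
Thus your $j=k-1$ term at position $k$ cancels your $j=k+2$ term at position $k-1$, and $[Q_2,S]=\sum_k\mathcal D_{k,k+1,k+2}$ exactly. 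This is the identity the paper's proof starts from.

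For fixed $k$ your surviving terms live on the five sites $k-1,\dots,k+3$, not four. If you compute $\mathcal W$ without re-indexing, you will see nonzero range-four strings, but these are precisely the ones that cancel between translates, so your "cannot cancel" argument does not apply. After re-indexing to a genuine four-site density, its range-four part is zero. Either way, discarding the three-site strings discards everything.

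Your concrete test also returns zero. If $|v\rangle$ differs from $|0\cdots0\rangle$ on three sites, the element reduces to $\langle000|\mathcal D|111\rangle$ or vanishes, and $\langle000|\mathcal D|111\rangle=0$. The fallback does not reach the statement either. Exact computations at $L=5,6,7$ cover finitely many lengths, not generic $L$. Moreover, the rank logic of Proposition~\ref{prop:non-K} is not a test of whether a matrix is nonzero; it tests whether $\mathcal D$ is a telescoping difference.

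The repair is to apply your own support argument one range lower, to $\mathcal D$. A matrix element of $\mathcal D$ that changes all three sites cannot be produced by any $K_{23}-K_{12}$. On a ring with $L\ge4$ it also cannot cancel between translates, because only $\mathcal D_{1,2,3}$ acts nontrivially on sites $1,2,3$ simultaneously.

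Such an element exists. Write $\mathcal D=a^2c-2aca+ca^2+2cac-c^2a-ac^2$ with $a=h_{12}$ and $c=h_{23}$. Evaluating the six monomials gives $\langle100|\mathcal D|011\rangle=-4b'^3$, which is the term $-4\sigma^-_1\sigma^+_2\sigma^+_3$ in the paper's expression for $\mathcal D$. Hence $\langle 10\cdots0|[Q_2,S]|0110\cdots0\rangle=-4b'^3\neq0$ for every $L\ge4$ (with $b'\neq0$).

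The paper reaches the same conclusion differently. It uses the rank computation of Proposition~\ref{prop:non-K}, together with the fact that a periodic sum of three-site densities vanishes only if the density telescopes. The three-site matrix element is a more elementary certificate of the same fact and makes the admissible range of $L$ explicit.
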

\begin{proof}
For a regular non-difference $R(u,v)$, the local density $h(u)$ generally depends explicitly on the base spectral parameter. In the present case, however, the dependence is particularly simple
\begin{equation*}
    g=\left.\partial_b h(b')\right|_{b=0}
=\partial_b\left(b'~U\right)
=b'' ~U \propto h,
\end{equation*}
So $g$ is merely proportional to the Hamiltonian density $h$. This is readily observed from the explicit expression in \cref{eq:h-density}. As a result $ G(g) \propto Q_2$, and therefore $[Q_2, G(g)]=0$ holds trivially. 
On the other hand, evaluating the commutator $\left[Q_2 , S\right]$ explicitly, we find that
    \begin{eqnarray*}
     [Q_2,S] = \sum_j \left[h_{j,j+1}+h_{j+1,j+2},[h_{j,j+1},h_{j+1,j+2}]\right]= \sum_j \mathcal{D}_{j,j+1,j+2}.
\end{eqnarray*}
Here the site index $j$ is understood modulo $L$, corresponding to periodic boundary conditions. In particular, local density of $[Q_2,S]$ is precisely the operator $\mathcal{D} =[h_{12}+h_{23},[h_{12},h_{23}]]$. On a periodic chain, a translationally invariant sum of three-site operators $\sum_j \mathcal{D}_{j,j+1,j+2}$ can vanish identically only if the local operator $\mathcal{D}$ can be written as a telescoping difference of a two-site operator. That is, there must exist a two-site operator $K$ such that $\mathcal{D} =K_{23} -K_{12}$. Indeed, if such a $K$ exists, then 
\begin{eqnarray*} 
\sum_j \mathcal{D}_{j,j+1,j+2} =\sum_j \left( K_{j+1,j+2}-K_{j,j+1} \right), 
\end{eqnarray*} 
and the sum telescopes to zero on a periodic chain, 
\begin{eqnarray*}
    \sum_j \left( K_{j+1,j+2}-K_{j,j+1} \right) =0.
\end{eqnarray*}
We have already shown in \cref{prop:non-K} that no two-site operator $K$ satisfies this relation for the Hamiltonian density under consideration. Therefore, the three-site contributions cannot cancel through telescoping \footnote{This is the one-dimensional telescoping, or local coboundary, property of nearest-neighbor spin chains.}, and we conclude that
\begin{eqnarray*}
[Q_2,S]\neq 0.
\end{eqnarray*}
\end{proof}


In addition to the above tests for the Reshetikhin conditions we can also directly see that the non-hermitian Hamiltonian density does not arise from a regular $R$-matrix of either the difference or non-difference form. This direct obstruction displayed within the $B$-family itself.  Substituting the $B$-matrix into the ordinary non-difference form YBE
\begin{equation}
B_{12}(p)B_{13}(q)B_{23}(r)=B_{23}(r)B_{13}(q)B_{12}(p)
\end{equation}
reduces it to the additive form, $q=p+r$ only when $pr=0$.  This then corresponds to the degenerate branches $(p,q,r)=(0,r,r)$ and $(p,p,0)$, which are trivial solutions in the context of integrability. This thus implies that there is no genuine spectral composition with $pr\neq 0$.

\section{Cross-commutativity and the graded nilpotent hierarchy}
\label{sec:nilpotent-symmetry}
Let
\begin{equation}\label{eq:BC-monodromy}
T_a^{(B)}(b)=B_{aL}(b)\cdots B_{a1}(b),\qquad
T_a^{(C)}(x,y)=C_{aL}(x,y)\cdots C_{a1}(x,y)
\end{equation}
be homogeneous monodromy matrices, and define the corresponding transfer matrices $\tau_B=\tr_aT_B$ and $\tau_C=\tr_aT_C$, obtained after tracing over the auxiliary space indexed by $a$. With $A$ as the intertwiner it is easy to see that  
\begin{equation}\label{eq:ATbTc}
    A_{\mu\nu}(x,y)T_\mu^{(B)}(b)T_\nu^{(C)}(x,y)=T_\nu^{(C)}(x,y)T_\mu^{(B)}(b)A_{\mu\nu}(x,y),
\end{equation}
is satisfied by repeated use of the scalene relation. This is the analog of the $RTT$ relation in the equilateral case. We then have the following result for the cross-commutativity of the two transfer matrices.

\begin{proposition}[Scalene cross-commutativity]
If the local operators satisfy \cref{eq:exactSYBE} and the corresponding auxiliary intertwiner is invertible, then, with compatible homogeneous periodic monodromy conventions,
\begin{equation}
[\tau_B(b),\tau_C(x,y)]=0.
\label{eq:cross}
\end{equation}
Consequently every coefficient in any regular parameter expansion of $\tau_C$ commutes with $\tau_B(b)$ and with every charge obtained by expanding $\tau_B$ at its regular point.
\end{proposition}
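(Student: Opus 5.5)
The plan follows the standard QISM argument: trace the scalene $RTT$-type relation \cref{eq:ATbTc} over both auxiliary spaces. The first step is to establish \cref{eq:ATbTc} itself by induction on $L$. For one site, it is exactly \cref{eq:exactSYBE} with the identification $1\to\mu$, $3\to j$, $2\to\nu$. Concretely, $A_{\mu\nu}B_{\mu j}C_{\nu j}=C_{\nu j}B_{\mu j}A_{\mu\nu}$, where the index pattern $A_{12}B_{13}C_{23}$ matches with site $j$ playing the role of space $3$.

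For the inductive step, write $T^{(B)}_\mu=B_{\mu L}\,T'^{(B)}_\mu$ and $T^{(C)}_\nu=C_{\nu L}\,T'^{(C)}_\nu$. Since $T'^{(B)}_\mu$ and $C_{\nu L}$ act on disjoint spaces, they commute. This gives
\[
A_{\mu\nu}B_{\mu L}T'^{(B)}_\mu C_{\nu L}T'^{(C)}_\nu=A_{\mu\nu}B_{\mu L}C_{\nu L}T'^{(B)}_\mu T'^{(C)}_\nu .
\]
Applying the local relation at site $L$ moves $A_{\mu\nu}$ to the right past $B_{\mu L}C_{\nu L}$. The induction hypothesis then moves it past the shorter monodromies. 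Finally, one reorders the terms using commutation between disjoint spaces to reach $T^{(C)}_\nu T^{(B)}_\mu A_{\mu\nu}$. The one point to check is that the ordering conventions are ``compatible'': both monodromies must be ordered $L\cdots1$ as in \cref{eq:BC-monodromy}, so that the local relations chain correctly.

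Next, use invertibility of $A$, which holds since $\det A=1$ and indeed $A^{-1}=A$. This rewrites \cref{eq:ATbTc} as
\[
T^{(B)}_\mu T^{(C)}_\nu=A_{\mu\nu}^{-1}\,T^{(C)}_\nu T^{(B)}_\mu\,A_{\mu\nu}.
\]
Take the full trace $\tr_{\mu\nu}$ over $V_\mu\otimes V_\nu$. On the left this gives $\tau_B(b)\tau_C(x,y)$, because the traces over the distinct auxiliary spaces factorize. On the right, cyclicity of the partial trace over $\mu\nu$ applies: $A_{\mu\nu}$ acts trivially on the quantum space, so one can cycle it to obtain $\tr_{\mu\nu}(T^{(C)}_\nu T^{(B)}_\mu)=\tau_C\tau_B$. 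I expect this justification to be the main subtlety. One must note that for operators $X$ on aux$\otimes$quantum and invertible $M$ acting only on aux, $\tr_{\rm aux}(M^{-1}XM)=\tr_{\rm aux}X$. This is verified entrywise in a basis of the quantum space. It also needs $T^{(B)}_\mu$ and $T^{(C)}_\nu$ to factor so that the trace splits, which holds because $\mu\neq\nu$.

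For the consequence, $\tau_C(x,y)$ commutes with $\tau_B(b)$ identically in $x,y,b$. If $x,y$ are functions of a spectral parameter $v$ with a convergent (or formal) expansion $\tau_C=\sum_k v^k c_k$, linearity and the independence of $v$ from $b$ give $[\tau_B(b),c_k]=0$ for each $k$, by comparing coefficients. Expanding $\tau_B(b)=\sum_m b^m t_m$ likewise gives $[t_m,c_k]=0$. Since $\tau_B(0)=\tr_a(P_{aL}\cdots P_{a1})$ is the invertible shift operator, the logarithmic derivatives of $\tau_B$ at $b=0$ are polynomials in the $t_m$ and $\tau_B(0)^{-1}$. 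Each $c_k$ therefore commutes with every such charge, in particular with $H$.
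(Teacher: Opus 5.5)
Your proof is correct and follows the paper's route: multiply \cref{eq:ATbTc} by $A^{-1}$, take the trace over both auxiliary spaces using cyclicity of the auxiliary partial trace, and then compare Taylor coefficients in $b$ and in the parameters of $\tau_C$. Your explicit induction for \cref{eq:ATbTc} and your entrywise justification of partial-trace cyclicity fill in steps that the paper asserts as evident.
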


\begin{proof}
The first statement of cross-commutativity, \cref{eq:cross} follows from multiplying both sides of \cref{eq:ATbTc} by $A^{-1}$ and tracing out the auxiliary indices $\mu$ and $\nu$, after using the cyclicity of the partial trace. 

The second statement is obtained by Taylor expanding $\tau_B$ about the regular point $b=0$, and $\tau_C$ about some fixed point $(x_0,y_0)$. For simplicity we choose the latter to be $(0,0)$ as well. We find that by substituting the corresponding Taylor expansions,
\begin{eqnarray}
    \tau_B & = & \sum\limits_{m=0}^\infty~\frac{b^m}{m!}\frac{\partial^m\tau_B}{\partial b^m}\Bigg|_{b=0} \equiv \sum\limits_{m=0}^\infty~b^mI_{B;m} \nonumber \\
    \tau_C & = & \sum\limits_{j=0}^\infty\sum\limits_{k=0}^\infty~\frac{x^jy^k}{j!k!}\frac{\partial^{j+k}\tau_C}{\partial x^j\partial y^k}\Bigg|_{x=0, y=0} \equiv \sum\limits_{j=0}^\infty\sum\limits_{k=0}^\infty~x^jy^k I_{C;j,k} \nonumber
\end{eqnarray}
into \cref{eq:cross} yields 
\begin{equation}\label{eq:cross-tower}
    \left[I_{B;m}, I_{C;j,k}\right] = 0~;~\forall~m,j,k.
\end{equation}

\end{proof}

It should be noted that \cref{eq:cross-tower} establishes cross-commutativity, and not self-commutativity of either the $B$ or $C$ charges. For instance this can already be seen for $L=2$, where one finds
\begin{equation}
[\tau_B(p),\tau_B(q)]=4pq(p-q)\,\Omega,
\label{eq:Bnoncomm}
\end{equation}
with 
\begin{equation}
    \Omega_{j,j+1}= Z_j \sigma^+_{j+1}+ \sigma^+_j Z_{j+1} - \sigma^+_j\sigma^+_{j+1}
\end{equation}

Thus the $B$-transfer matrices do not form a conventional commuting family for generic $p,q$. This feature further reaffirms the fact that the non-hermitian Hamiltonian \cref{eq:Hpauli} cannot be obtained from a regular $R$-matrix solving the equilateral YBE.

The above property changes for the $C$ transfer matrix, $\tau_C$.
Let $\sigma^{\pm}=\frac{X\pm\mathrm{i}Y}{2}$ be the standard raising(lowering) operator. Then we will show that the staggered global operator
\begin{equation}
J_+=\sum_{j=1}^{L}(-1)^j\sigma_j^+,
\label{eq:Jplus}
\end{equation}
is a symmetry of \cref{eq:Hpauli} and that its even powers can be extracted from the Taylor expansion of $\tau_C$.
In auxiliary-space block form the local $C$-operator is
\begin{equation}
C_{aj}(x,y)=
\begin{pmatrix}
\mathbf1_j&x\sigma_j^+\\ y\sigma_j^+&-\mathbf1_j
\end{pmatrix}_{a} = Z_a\mathbb{1}_j + y\sigma^-_a\sigma^+_j + x\sigma^+_a\sigma^+_j.
\label{eq:Cblock}
\end{equation}
The presence of $\sigma^{\pm}$ on the auxiliary space significantly simplifies the computation of $\tau_C$ even though $C$ is non-regular. This is shown in the following proposition.

\begin{proposition}
For a periodic chain of length $L$,
\begin{equation}
\tau_C(x,y)=
\begin{cases}
0,&L\ \text{odd},\\[1mm]
2\cos\!\left(\sqrt{xy}\,J_+\right),&L\ \text{even}.
\end{cases}
\label{eq:tCexact}
\end{equation}
The cosine terminates because $J_+^{L+1}=0$.
\end{proposition}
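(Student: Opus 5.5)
Our plan is to compute the monodromy $T_a^{(C)}=C_{aL}\cdots C_{a1}$ directly from the block form \cref{eq:Cblock}, treating each $C_{aj}$ as a $2\times2$ matrix in auxiliary space whose entries are single-site operators. Write $C_{aj}=Z_a+N_j$ with $N_j:=(y\sigma_a^-+x\sigma_a^+)\sigma_j^+$. Expanding the ordered product gives a sum over subsets $S=\{j_1<\dots<j_k\}$ of sites at which the factor $N$ is chosen, with $Z_a$ chosen elsewhere. Each monomial is an auxiliary word in $Z_a$, $\sigma_a^{\pm}$, multiplied by the quantum operator $\sigma_{j_1}^+\cdots\sigma_{j_k}^+$. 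The quantum factors all commute and are nilpotent at each site, so distinct sites are automatic.

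The key auxiliary identities are $Z\sigma^\pm=\pm\sigma^\pm$, $\sigma^\pm Z=\mp\sigma^\pm$, $(\sigma^\pm)^2=0$, and $\sigma^+\sigma^-+\sigma^-\sigma^+=\mathbf1$. Set $M:=y\sigma^-+x\sigma^+$. Then $MZ=-ZM$ and $M^2=xy\,\mathbf1$. We move every $Z_a$ in a monomial to the right (say). Passing $Z_a$ through $M$ costs a sign, and the total sign for site $j\notin S$ is $(-1)^{\#\{i\in S:\, i>j\}}$. After this rearrangement the monomial becomes $\epsilon_S\, M^k Z^{L-k}$.

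The trace is then handled by parity. For $k$ even, $M^k=(xy)^{k/2}\mathbf1$, so $\operatorname{tr}_a(M^kZ^{L-k})=(xy)^{k/2}\operatorname{tr}Z^{L-k}$. This equals $2(xy)^{k/2}$ if $L$ is even and $0$ if $L$ is odd. For $k$ odd, $M^k$ is off-diagonal and $Z^{L-k}$ is diagonal, so the trace vanishes. This already gives $\tau_C=0$ for $L$ odd. For $L$ even only even $k$ survive, and
\[
\tau_C=2\sum_{k\ \mathrm{even}}(xy)^{k/2}\sum_{|S|=k}\epsilon_S\prod_{j\in S}\sigma_j^+ .
\]

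The main obstacle is the sign bookkeeping: we must show $\epsilon_S=(-1)^{k/2}\prod_{j\in S}(-1)^j$ when $k$ is even. We then compare with $J_+^k=k!\sum_{|S|=k}\prod_{j\in S}(-1)^j\sigma_j^+$, which holds because the $\sigma_j^+$ commute and square to zero. This yields $\tau_C=2\sum_k (-1)^{k/2}(xy)^{k/2}J_+^k/k!=2\cos(\sqrt{xy}J_+)$, in which only even powers appear, so the square-root branch is irrelevant. To get $\epsilon_S$, count the transpositions. The number of sites $j\notin S$ lying to the left of $j_m$ (i.e., $j<j_m$) is $j_m-m$, and each such $Z$ passes the $N$ at $j_m$ exactly once. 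Hence $\epsilon_S=(-1)^{\sum_m (j_m-m)}=(-1)^{\sum_m j_m}(-1)^{k(k+1)/2}$. For even $k$ we have $(-1)^{k(k+1)/2}=(-1)^{k/2}$, as required. The precise direction of the ordering (which of $C_{aL}\cdots C_{a1}$ sits leftmost) only changes $j_m-m$ into $L-j_m-(k-m)$. For even $L$ and even $k$ this gives the same parity, so we will verify it carefully but expect no sign difference.

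Termination follows because $J_+^{L+1}=0$: any product of $L+1$ factors $\sigma^+$ must repeat a site.
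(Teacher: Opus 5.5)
Your argument is correct. It starts from the same point as the paper: expand $C_{aL}\cdots C_{a1}$ with $C_{aj}=Z_a+(y\sigma_a^-+x\sigma_a^+)\sigma_j^+$, keep only the auxiliary-traceful terms, and eliminate odd $L$ by parity.

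The difference is in the even case. The paper only computes $L=2$ and $L=4$ and extrapolates the pattern. Your ingredients together give an actual proof for every even $L$:
\begin{itemize}
\item the relations $MZ=-ZM$ and $M^2=xy\,\mathbf 1$;
\item the transposition count giving $\epsilon_S=(-1)^{k/2}\prod_{j\in S}(-1)^j$ for even $k$, which is independent of the ordering convention, as you correctly checked;
\item the identity $J_+^k=k!\sum_{|S|=k}\prod_{j\in S}(-1)^j\sigma_j^+$.
\end{itemize}

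Your bookkeeping also exposes a slip in the paper's $L=4$ example. The top term should read $2x^2y^2\,\sigma_1^+\sigma_2^+\sigma_3^+\sigma_4^+$, not $2\frac{x^2y^2}{2!}\sigma_1^+\sigma_2^+\sigma_3^+\sigma_4^+$. Indeed $\tr_a M^4=2(xy)^2$, which matches $2\frac{(xy)^2}{4!}J_+^4$. The closed form $2\cos(\sqrt{xy}\,J_+)$ stated in the proposition is unaffected.
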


\begin{proof}
For the expansion shown in \cref{eq:Cblock}, the only term that contributes to the trace on the auxiliary space, from the monodromy matrix for C, \cref{eq:BC-monodromy} is the $2\times 2$ identity. Thus the computation of $\tau_C$ reduces to collecting the terms that produces this identity matrix on the auxiliary space, from the product of $L$, $C$ matrices, \cref{eq:Cblock}. Such terms can be obtained by counting the number of $Z_a$ matrices in a string of alternating $\sigma^{\pm}_a$'s. Then the traceful terms are those that contain an even number of $Z_a$'s and strings of the type $\sigma^{\pm}_a\cdots\sigma^{\mp}_a$ or $\sigma^{\mp}_a\cdots\sigma^{\pm}_a$. An immediate consequence of these requirements is that $\tau_C$ is identically 0 for odd $L$, as then we will necessarily violate one of the stated conditions.  As for the even $L$ case, we will identify $\tau_C$ by considering a few examples. For $L=2$ we find that 
\begin{eqnarray}
    \tr_a(C_{a2}C_{a1}) & = & \tr_a\left(\left[Z_a\mathbb{1}_2 + y\sigma^-_a\sigma^+_2 + x\sigma^+_a\sigma^+_2 \right]\left[Z_a\mathbb{1}_1 + y\sigma^-_a\sigma^+_1 + x\sigma^+_a\sigma^+_1 \right]\right) \nonumber \\
    & = & 2\mathbb{1} + 2xy~\sigma^+_1\sigma^+_2.
\end{eqnarray}
The traceful terms are obtained from a term containing two $Z_a$'s, the $\mathbb{1}$ term, and no $Z_a$, the $\sigma^+_1\sigma^-_2$ term. When $L=4$ the traceful terms contain either four $Z_a$'s, two $Z_a^2$ or no $Z_a$'s. The first of these just lead to $\mathbb{1}$ on the physical space. The second set is inserted in bilinear strings of $\sigma^+_j\sigma^-_k$ and the third in a quadrilinear string of alternating $\sigma^{\pm}$'s. The net result is,  
\begin{eqnarray}
    & \tau_C(x,y) =  \tr_a\left(C_{04}C_{03}C_{02}C_{01}\right)     & \nonumber \\
    &  = \tr_a\left(\left[Z_0\mathbb{1}_4 + y \sigma^-_0\sigma^+_4 + x \sigma_0^+ \sigma^+_4 \right]\left[Z_0\mathbb{1}_3 + y \sigma^-_0\sigma^+_3 + x \sigma_0^+ \sigma^+_3 \right]\right.    & \nonumber \\
    &\times \left.\left[Z_0\mathbb{1}_2 + y \sigma^-_0\sigma^+_2 + x \sigma_0^+ \sigma^+_2 \right] \left[Z_0\mathbb{1}_1 + y \sigma^-_0\sigma^+_1 + x \sigma_0^+ \sigma^+_1 \right]\right)       & \nonumber \\
    & = 2\mathbb{1} + 2\frac{xy}{1!}\left( \sigma_1^+\sigma_2^+ + \sigma_2^+\sigma_3^+ + \sigma_3^+\sigma_4^+ + \sigma_4^+\sigma_1^+ - \sigma_1^+\sigma_3^+ - \sigma_2^+\sigma_4^+ \right) &  \nonumber \\
    & + 2\frac{x^2y^2}{2!} \sigma_1^+\sigma_2^+\sigma_3^+\sigma_4^+  &
\end{eqnarray}
It is easily observed that the coefficient of the $(xy)^{2k}$ is the same even power of the staggered nilpotent operator \cref{eq:Jplus}, and thus the claimed result follows for the even case.

\end{proof}

Thus for even $L$ we have,
\begin{equation}
\tau_C(x,y)=2\sum_{n=0}^{\lfloor L/2\rfloor}
\frac{(-xy)^n}{(2n)!}J_+^{2n}.
\label{eq:tCseries}
\end{equation}
The scalene transfer matrix therefore generates the finite conserved hierarchy
\begin{equation}
\mathbf1,\ J_+^2,\ J_+^4,\ldots,\ J_+^{2\lfloor L/2\rfloor}.
\end{equation}
It is a tower in the transfer-matrix sense, but not a tower of algebraically independent charges: every coefficient lies in the even subalgebra of
\begin{equation}
\mathbb C[J_+]/(J_+^{L+1}).
\end{equation}
At this point it is worth noting that the problem of algebraic independence of the conserved charges plagues even standard integrable spin chains such as the $XXX$ model. In fact, algebraic independence of every transfer-matrix coefficient is not itself guaranteed by the conventional Yang–-Baxter construction. On any finite chain it is, strictly speaking, impossible because the charges are finite-dimensional matrices and hence they should satisfy polynomial identities. For standard models such as the $XXX$ chain, the logarithmic derivatives of the fundamental transfer matrix are instead regarded as nonredundant in the thermodynamic local-operator sense: successive charges contain connected densities of increasing range \cite{GrabowskiMathieu1995-2}. Even this conventional local family is not complete in the strongest thermodynamic sense, since further quasilocal charges linearly independent of it are known \cite{IlievskiMedenjakProsen2015,IlievskiStringCharge2016}. In contrast to this the hierarchy for the non-hermitian Hamiltonian \cref{eq:Hpauli}, exhibits a stronger, explicitly identifiable dependence, since every coefficient is a power of $J_+^2$.

It should therefore be interpreted as a transfer-matrix realization of the conserved polynomial algebra generated by a hidden nilpotent symmetry, rather than as an algebraically independent family of local integrals of motion.

The fact that the even powers of the staggered nilpotent operator \cref{eq:Jplus} are conserved motivates us to check if \cref{eq:Jplus} is conserved by itself. In fact it turns out that this stronger linear symmetry can be checked locally. We find that 
\begin{equation}
[h_{j,j+1},\sigma_j^+-\sigma_{j+1}^+]=0.
\label{eq:localsym}
\end{equation}
Multiplying alternating local identities by signs and summing on an even periodic chain gives
\begin{equation}
[H,J_+]=0.
\label{eq:HJsym}
\end{equation}
This direct verification is consistent with \cref{eq:cross,eq:tCexact}, but the important point is how \cref{eq:Jplus} was found: it emerges from an exact auxiliary member of the scalene triple, rather than from an ansatz for global symmetries of the complicated Pauli Hamiltonian \cref{eq:Hpauli}.

Because $J_+$ is nilpotent, it does not yield the usual decomposition into eigenspaces of a semisimple symmetry.  Instead it defines an invariant filtration
\begin{equation}
0\subset\ker J_+\subset\ker J_+^2\subset\cdots\subset\ker J_+^{L+1}=\cH.
\label{eq:filtration}
\end{equation}
Furthermore, \cref{eq:HJsym} implies that every subspace in \cref{eq:filtration} is preserved by $H$.  A basis adapted to this filtration therefore block-upper-triangularizes the Hamiltonian and may be useful for analyzing its Jordan structure.

\section{Discussion and outlook}
\label{sec:conclusion}
In this work we have considered an example of a non-hermitian deformation of the $XXZ$ model, \cref{eq:Hpauli}, constructed from a non-braided scalene triple that satisfies the scalene YBE \cref{eq:exactSYBE}. This model has three key features:
\begin{enumerate}
    \item the Hamiltonian density violates both the ordinary regular-YBE Reshetikhin tests, namely the ones derived from $R$-matrices of the difference and non-difference form. This implies that this Hamiltonian cannot be derived from a regular $R$-matrix satisfying the standard or equilateral YBE.
    \item the above point is further reinforced by the absence of a self-commuting $B$-transfer family, from which this Hamiltonian is derived. This implies that there is no operator that acts as an intertwiner between the $B$-matrices.
    \item nevertheless, cross commutativity of the $B$ and $C$-transfer matrices, \cref{eq:cross} helped us derive the commutant of the Hamiltonian from the algebra generated using $\tau_C$. In the example considered this turned out to be generated by a non-trivial staggered nilpotent symmetry \cref{eq:Jplus}.
\end{enumerate}

Apart from this it is important to note that the $A$-matrix acts as an intertwiner between a regular $B$-matrix and a non-regular $C$-matrix, which is impossible in the standard or equilateral YBE setting. 
Thus this example marks a precise boundary between conventional transfer-matrix integrability, obtained from the standard YBE, and the broader notion of integrability suggested by the scalene YBE.

Thus the example considered in this work is not integrable in the standard sense. The coefficients of the $C$-transfer matrix are functions of one nilpotent generator, rather than an extensive set of algebraically independent local charges, while \cref{eq:Bnoncomm} rules out the usual commuting $B$-family. So through this example we have established that  \begin{equation*}
\text{scalene YBE}\quad\Longrightarrow\quad
\text{cross-commuting transfer matrices}\quad\Longrightarrow\quad
\text{hidden conserved symmetry}.
\end{equation*}
Clearly a more desirable scalene triple would be one that would result in a system with a more non-trivial tower of conserved charges similar to the ones seen in the case of standard integrable models. However, constructing scalene triples seems to be much harder than the standard YBE. Among the many solutions we studied, another one is worth mentioning due to the models obtained from that triple. Consider the solution
\begin{eqnarray}
  &  A=
\begin{pmatrix}
0 & 0 & 0 & a_1 \\[2pt]
0 & i\sqrt{a_1}\sqrt{a_2} & 0 & 0 \\[2pt]
0 & 0 & -i\sqrt{a_1}\sqrt{a_2} & 0 \\[2pt]
a_2 & 0 & 0 & 0
\end{pmatrix}~;~ B=
\begin{pmatrix}
1+b_1 & 0 & 0 & b_5 \\[2pt]
0 & b_2 & 1+b_6 & 0 \\[2pt]
0 & 1+b_7 & b_3 & 0 \\[2pt]
b_8 & 0 & 0 & 1+b_4
\end{pmatrix} & \nonumber \\
& C
= \begin{pmatrix}
-\dfrac{i\sqrt{a_1}\,b_3}
        {\sqrt{a_2}(1+b_6)}
&
0
&
0
&
\dfrac{a_1(1+b_7)}
      {a_2(1+b_6)}
\\[10pt]
0
&
\dfrac{i\sqrt{a_1}(1+b_4)}
      {\sqrt{a_2}(1+b_6)}
&
-\dfrac{a_1b_8}
       {a_2(1+b_6)}
&
0
\\[10pt]
0
&
-\dfrac{b_5}{1+b_6}
&
-\dfrac{i\sqrt{a_1}(1+b_1)}
       {\sqrt{a_2}(1+b_6)}
&
0
\\[10pt]
1
&
0
&
0
&
\dfrac{i\sqrt{a_1}\,b_2}
      {\sqrt{a_2}(1+b_6)}
\end{pmatrix}.  & 
\end{eqnarray}
In this case we can use $C$ as the intertwiner between $A$ and $B$, establishing the cross-commutativity of their respective transfer matrices. As in the example studied in this work, here too $B$ is regular, while $A$ is not\footnote{$A$ is similar to the braided form of the $(1,4)$ class of Hietarinta's classification of $4\times 4$ Yang-Baxter solutions \cite{HIETARINTA-PLA,MSPK-Hietarinta}.}. The Hamiltonian density from the $B$-transfer matrix is a general eight-vertex example with arbitrary parameters $b_j$'s. Such a model is generally not expected to be integrable in the Yang-Baxter sense. Consistent with that we find that the $A$-transfer matrix only generates operators that are functions of the global parity and the translational symmetry of the periodic chain. 

We will leave the search for the more interesting scalene triple that will truly establish integrability beyond the standard Yang-Baxter framework for a future work.

As a final point, it would also be interesting to study the properties non-hermitian Hamiltonian constructed in this work. This is because non-Hermitian quantum spin chains arise naturally as effective descriptions of open and nonequilibrium many-body systems. In the quantum-trajectory formulation of a Lindblad evolution, evolution conditioned on the absence of detected quantum jumps is generated by a non-Hermitian effective Hamiltonian \cite{DalibardCastinMolmer1992,PlenioKnight1998,AshidaGongUeda2020}. Interacting non-Hermitian chains can therefore encode the conditional relaxation, transport and entanglement dynamics of monitored or dissipative quantum systems. Closely related non-Hermitian spin operators appear as stochastic generators of classical reaction--diffusion and exclusion processes, where their spectra determine relaxation rates and nonequilibrium steady-state properties \cite{AlcarazEtAl1994,Schutz2001}. These connections suggest that the Hamiltonian constructed here may provide an exactly structured starting point for investigating conditioned open-system dynamics and nonequilibrium relaxation in the presence of a hidden nilpotent symmetry. In particular, the exact conservation of the staggered nilpotent operator may constrain decay channels, produce invariant sectors of the conditioned dynamics, and influence the formation of exceptional points or nontrivial Jordan blocks. Determining a microscopic Lindblad or stochastic realization of the present Hamiltonian, and establishing how the nilpotent symmetry manifests itself after the recycling terms associated with quantum jumps are restored, constitute interesting directions for future work.




\section*{Acknowledgements}
PP thanks Jarmo Hietarinta for the introduction to the scalene Yang--Baxter relations. PP also thanks the organizers of the The 2nd ISNMP Conference, Bad Ems, Germany, where this work was initiated.

\appendix
\section{Rigorous proof for the violation of the first Reshetikhin condition}
\label{app:vectorization}
Let us begin by recalling the Rouch\'e--Capelli theorem \cite[p. 56, Thm. 2.38]{shafarevich2012linear}, which provides the rank criterion needed to determine whether the linear system for the unknown $K$ has a solution or not.
\begin{theorem}[Rouch\'e--Capelli]
\label{thm:rouche-capelli}
Let $M$ be an $m\times n$ matrix over a field $\mathbb{F}$, and let
$x \in \mathbb{F}^{n}$ and $d \in \mathbb{F}^{m}$. Denote by $(M\mid d)$ the augmented matrix obtained by appending $d$ as an additional column to $M$. Then the linear system
\begin{eqnarray*}
M x = d
\end{eqnarray*}
has a solution if and only if
\begin{eqnarray*}
\operatorname{rank}(M) = \operatorname{rank}(M\mid d).
\end{eqnarray*}
\end{theorem}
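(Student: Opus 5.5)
This is the standard Rouché–Capelli criterion from linear algebra. I would prove it by reading the system $Mx=d$ column-wise and comparing column spaces. Write $M=(m_1\mid\cdots\mid m_n)$ with columns $m_i\in\mathbb F^m$. Then
\[
Mx=\sum_{i=1}^n x_i m_i,
\]
so the system $Mx=d$ is solvable exactly when $d$ lies in the column space $\operatorname{col}(M)=\operatorname{span}\{m_1,\dots,m_n\}$. Since column rank equals rank, the rank equality $\operatorname{rank}(M)=\operatorname{rank}(M\mid d)$ is equivalent to
\[
\dim\operatorname{span}\{m_1,\dots,m_n\}=\dim\operatorname{span}\{m_1,\dots,m_n,d\}.
\]
The proof therefore reduces to showing that $d\in\operatorname{col}(M)$ if and only if this dimension equality holds.

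\emph{Solvable implies equal ranks.} Suppose $Mx=d$ has a solution. Then $d=\sum_i x_i m_i$ is a linear combination of the columns of $M$. Hence $\operatorname{col}(M\mid d)=\operatorname{col}(M)$, and in particular the two dimensions coincide.

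\emph{Equal ranks implies solvable.} We always have the inclusion $\operatorname{col}(M)\subseteq\operatorname{col}(M\mid d)$. If the two finite-dimensional subspaces have equal dimension, this inclusion is an equality. Therefore $d\in\operatorname{col}(M)$, and writing $d=\sum_i x_i m_i$ gives a solution $x$ of $Mx=d$.

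The only step that needs any care is the identification of the rank with the dimension of the column space, i.e.\ the equality of row rank and column rank. I would take this as the standard fact from the cited text (or prove it via Gaussian elimination, which preserves both ranks). With that fact in hand, the rest is immediate.

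For the application in \cref{prop:non-K}, the computed values give $\operatorname{rank}(\mathcal M_{\rm d}\mid\operatorname{vec}\mathcal D)=16>15=\operatorname{rank}\mathcal M_{\rm d}$. By the criterion, $\operatorname{vec}\mathcal D\notin\operatorname{col}(\mathcal M_{\rm d})$, so no matrix $K$ satisfies $\mathcal D=K_{23}-K_{12}$.
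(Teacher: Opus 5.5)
Your proposal is correct: reading $Mx=d$ as the condition $d\in\operatorname{span}\{m_1,\dots,m_n\}$, and then using $\operatorname{col}(M)\subseteq\operatorname{col}(M\mid d)$ with a dimension count, is the standard proof of this criterion. The paper gives no proof of its own and only cites Shafarevich--Remizov for the result, so your column-space argument supplies that standard proof, and your closing application ($16>15$ forces $\operatorname{vec}\mathcal D\notin\operatorname{col}\mathcal M_{\rm d}$) matches how the paper uses it in \cref{prop:non-K}.
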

We will use the criterion set by this Theorem and complete the linear-algebraic justification for Proposition~\ref{prop:non-K}. The central idea is that the unknown operator $K$ enters linearly in \cref{eq:staticR}. Therefore, we can convert the constraint into an ordinary linear system for the entries of $K$. Let $E_{1},E_{2},\dots,E_{16}$ be a basis of the $16$ dimensional vector space of $4 \times 4$ matrices. In this way, one may represent the unknown two-site operator $K$ by
\begin{equation}
K = \sum_{a=1}^{16} k_{a} E_{a},
\end{equation}
with the coefficients $k_{1},\dots,k_{16}$ being the $16$ independent unknowns.
Therefore,
\begin{equation}
K_{23} - K_{12} = \sum_{a=1}^{16} k_{a}\left(
id \otimes E_{a}-E_{a}\otimes id
\right) = \sum_{a=1}^{16} k_{a} \mathcal{L}_{a},
\end{equation}
where $\mathcal{L}_a  = \left(id \otimes E_{a}-E_{a}\otimes id\right)$ for $a=1,2,\ldots, 16$. Vectorizing \cref{eq:staticR}, we get
\begin{equation}
\operatorname{vec}\mathcal{D}= \sum_{a=1}^{16}k_{a}\,
\operatorname{vec} \mathcal{L}_{a}.
\end{equation}
Equivalently,
\begin{equation}
\mathcal{M}_{\mathrm d}~\mathbf{k} = \operatorname{vec}\mathcal{D},
\end{equation}
where 
\begin{equation}
\mathcal{M}_{\mathrm d}:=
\begin{bmatrix}
\vert & \vert & & \vert \\
\operatorname{vec} \mathcal{L}_{1} & \operatorname{vec}\mathcal{L}_{2} & \cdots & \operatorname{vec} \mathcal{L}_{16} \\
\vert & \vert & & \vert
\end{bmatrix},
\qquad
\mathbf{k} :=
\begin{pmatrix}
k_{1} \\
k_{2} \\
\vdots \\
k_{16}
\end{pmatrix}.
\end{equation}
The exact symbolic computation gives
\begin{equation}
\operatorname{rank}\mathcal{M}_{\mathrm{d}}=15~,
\qquad
\operatorname{rank}\left(\mathcal{M}_{\mathrm{d}}\mid \operatorname{vec}\mathcal{D}\right) =16.
\end{equation} 
The matrix $\mathcal{M}_{\mathrm{d}}$ maps the $16$- dimensional space of $K$'s entries into the $64$- dimensional space. The rank of $\mathcal{M}_{\mathrm{d}}$ is $15$ means the image of this map is $15$-dimensional. So the dimension of the kernel is $1$. This one-dimensional kernel has a simple interpretation. If $K$ is proportional to the identity,
\begin{equation}
K = \alpha~ I,
\end{equation}
then
\begin{equation}
K_{23}-K_{12}=0.
\end{equation}
Thus the identity matrix always lies in the kernel. This explains why the rank is $15$ rather than $16$. Now consider the augmented matrix
\begin{equation}
\left(\mathcal{M}_{\mathrm{d}}\mid \operatorname{vec}\mathcal{D}\right),
\end{equation}
which is obtained by appending the target vector $\operatorname{vec} \mathcal{D}$ as an additional column. If $\operatorname{vec}\mathcal{D}$ could be written as $\mathcal{M}_{\mathrm{d}}~\mathbf{k}$ for some choice of $K$, then adding $\operatorname{vec}\mathcal{D}$ would not increase the rank. However, the computation shows that
\begin{equation}
\operatorname{rank}\left(\mathcal{M}_{\mathrm{d}}\mid \operatorname{vec}\mathcal{D}\right)=16.
\end{equation}
Thus $\operatorname{vec}\mathcal{D}$ is linearly independent of the columns of $\mathcal{M}_{\mathrm{d}}$. Geometrically, $\operatorname{vec}\mathcal{D}$ does not lie in the $15$-dimensional image of the map. We see that the ranks of $\mathcal{M}_{\mathrm{d}}$ and $(\mathcal{M}_{\mathrm{d}}\mid \operatorname{vec}\mathcal{D})$ are different. Therefore, no two-site operator $K$ exists satisfying the \cref{eq:staticR}.

\bibliographystyle{ieeetr}
\bibliography{refs}
\end{document}